\theoremstyle{plain}
\newtheorem{theorem}{Theorem}
\newtheorem{proposition}{Proposition}
\newtheorem*{claim*}{Claim}
\newtheorem*{example*}{Example}
\newtheorem{example}{Example}
\theoremstyle{definition}
\newtheorem{definition}{Definition}
\newtheorem{assumption}{Assumption}
\theoremstyle{remark}
\newtheorem{remark}{Remark}
\newtheorem*{remark*}{Remark}
\newcommand{\R}{\mathbb{R}}
\begin{document}
\begin{frontmatter}

\title{
Safe-by-Design Planner-Tracker Synthesis
}
\author{Katherine S. Schweidel}
\author{He Yin}
\author{Stanley W. Smith}
\author{Murat Arcak}

\tnotetext[fninfo]{E-mail addresses: \{kschweidel, he\_yin, arcak\}@berkeley.edu}

\begin{abstract}
We present a safe-by-design trajectory planning and tracking framework for nonlinear dynamical systems using a hierarchy of system models. The planning layer uses a low-fidelity model to plan a feasible trajectory satisfying the {planning constraints}, and the tracking layer utilizes the high-fidelity model to design a controller that restricts the error states between the low- and high-fidelity models to a bounded set. The low-fidelity model enables the planning to be performed online (e.g. using Model Predictive Control) and the tracking controller and error bound are derived offline (e.g. using sum-of-squares programming). To provide freedom in the choice of the low-fidelity model, we allow the tracking error to depend on both the states and inputs of the planner. The goal of this article is to provide a tutorial review of this hierarchical framework and to illustrate it with examples, including a design for vehicle obstacle avoidance.

\emph{Keywords: Motion Planning, Hierarchical Control, Sum-of-Squares Programming, Model Predictive Control.}

\end{abstract}

\end{frontmatter}

\section{Introduction}\label{sec:introduction}

{
Modern engineering systems such as autonomous vehicles and UAVs must operate subject to complex safety and performance requirements
in changing environments.
Designing controllers that meet
such requirements in real-time may be computationally intractable, e.g., due to large system dimension or nonlinearities in a high-fidelity dynamical model
of the system.
The planner-tracker framework \cite{tedrake2010lqr, herbert2017fastrack, summetCT, kousik2018bridging, Singh2018RobustTW, YinStan2019, rosolia_mr} addresses this challenge with a layered architecture 
where a lower-fidelity ``planning" model is employed for online planning and a
``tracking" controller, synthesized offline, keeps the 
tracking error between the 
high-fidelity (``tracking") model and the planning model
within a bounded set. 
System safety is then guaranteed if
 the planner constraints, when augmented by the tracking error bound, 
lie within the 
safety constraints.

There is a choice to be made when defining the tracking error between the planner and tracker systems. In \cite{Singh2018RobustTW,YinStan2019,YinMoyACC}, the tracking error depends on only the planner/tracker \textit{states}. In \cite{meyer2019continuous}, the tracking error is generalized to also depend on the planner \textit{input} which allows for kinematic planning models. 
This is achieved by accounting for the jumps in the error variable that are induced by jumps in the {zero-order hold} input between time-steps.
References \cite{YinStan2019} and \cite{meyer2019continuous} further make a connection between the layered planner-tracker architecture and the notion of abstractions introduced in \cite{GirardPappas2009}. In doing so, they also eliminate the restrictive geometric conditions in \cite{GirardPappas2009}, 
also implicit  in
\cite{Singh2018RobustTW},
which require that
the set where the tracking error vanishes be invariant. Removing this requirement and allowing the tracking error to depend on planner inputs greatly expand the applicability of the planner-tracker framework.

In this tutorial we introduce a broad framework which encompasses those earlier results while further generalizing the error definition compared to~\cite{meyer2019continuous}. In addition, the framework described here is not restricted to a particular planner. Indeed, unlike the computationally heavy symbolic design method used for planning
 in~\cite{meyer2019continuous},
 the numerical example presented here uses the popular choice of Model Predictive Control (MPC), which is appropriate for real-time implementation.

Although MPC is often used for both planning and control, under mismatch of planning model and the plant, the MPC optimization problem must be robustified. Feasibility and stability properties of robust MPC have been studied in 
\cite{kothare1996robust, mayne2000constrained} and in subsequent publications. For linear systems, Tube MPC \cite{chisci2001systems,Langson2004RobustMP,Goulart2006, rakovic2012parameterized, rakovic2013homothetic,munoz2013recursively,fleming2014robust,rakovic2016elastic,bujarbaruah2020robust} is a widely used approach that solves a computationally efficient convex optimization problem for robust control synthesis. 
Although Tube MPC design with feasibility and stability properties are proposed for nonlinear systems in \cite{cannon2011robust, allgower2012nonlinear, yu2013tube, kohler2018novel, faulwasser2018economic, kollerFelix1, kohler2019computationally}, the control synthesis problem can become either too conservative, 
or computationally demanding.

The remainder of the paper is organized as follows. Section~\ref{sec:prob_setup} introduces the high-fidelity tracking model and the low-fidelity planning model and defines a simple tracking error that depends only on the planner/tracker \textit{states}. We build intuition with this simple error model and present the method for constructing a tracking controller and an error bound using sum-of-squares (SOS) programming.
Section~\ref{sec:input_jump} generalizes the tracking error definition to additionally depend on the planner \textit{input} and extends the results in Section~\ref{sec:prob_setup} to handle this generalized error. In Section~\ref{examples}, we demonstrate the method on a vehicle obstacle avoidance example, and we provide concluding remarks in Section~\ref{sec:conclusion}.
}



\begin{figure*} [h]
    \centering
    \begin{subfigure}[b]{0.48\textwidth}
    \centering
    \includegraphics[width=0.75\textwidth]{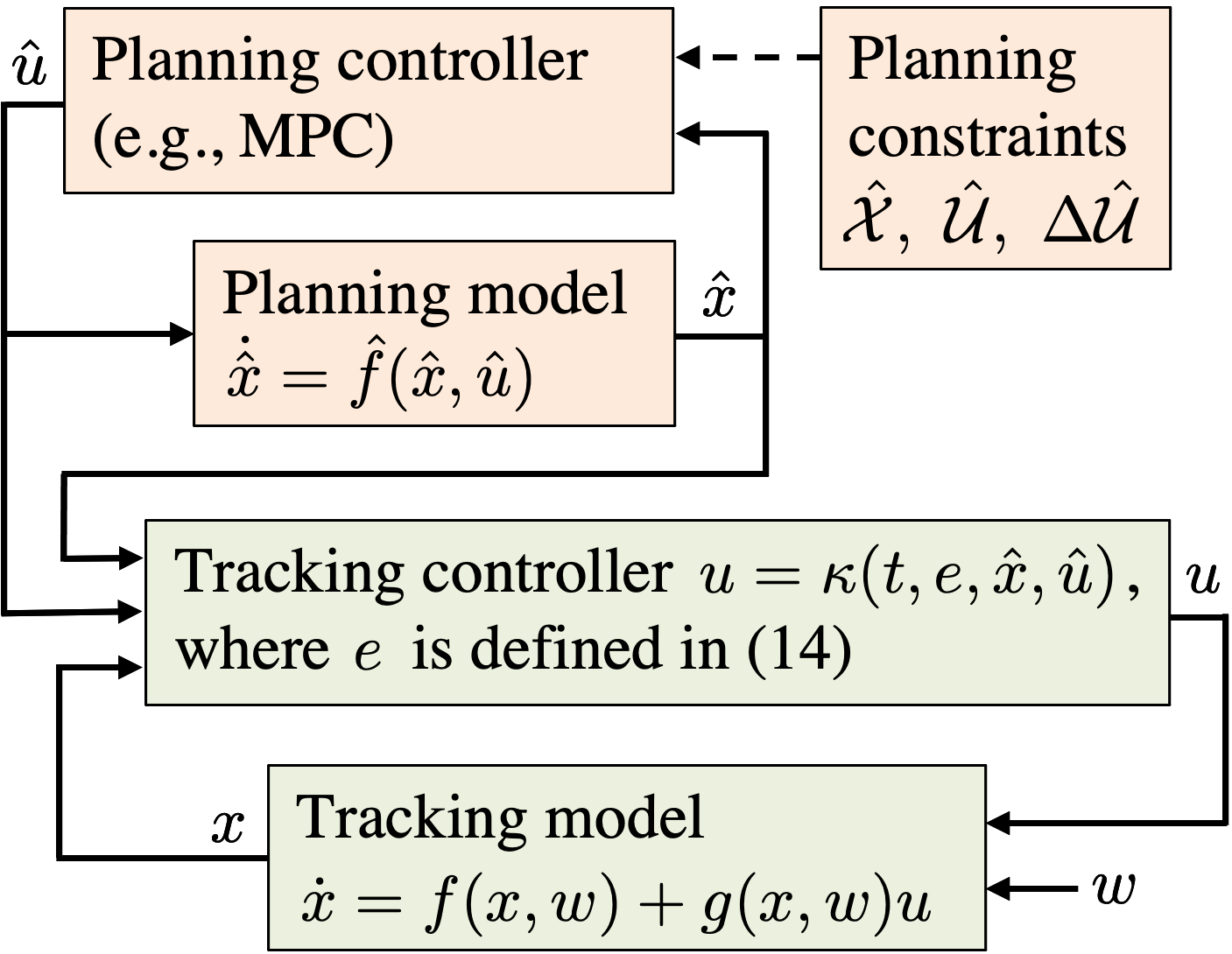}
    \caption{Online implementation. The planning controller uses the state $\hat{x}$ and the constraint sets $\hat{\mathcal{X}}$, $\hat{\mathcal{U}}$, and $\Delta\hat{\mathcal{X}}$ to generate a reference input $\hat{u}$. The tracking controller converts this into a control $u$ which is guaranteed to keep the tracking state $x$ within state constraints $\mathcal{X}$.
    This is accomplished by keeping the tracking error $e$ within a set $\mathcal{O}$ as described in Figure~\ref{fig: diagram}(b).}
    \end{subfigure}
    ~
    \begin{subfigure}[b]{0.48\textwidth}
    \centering
    \includegraphics[width=0.93\textwidth]{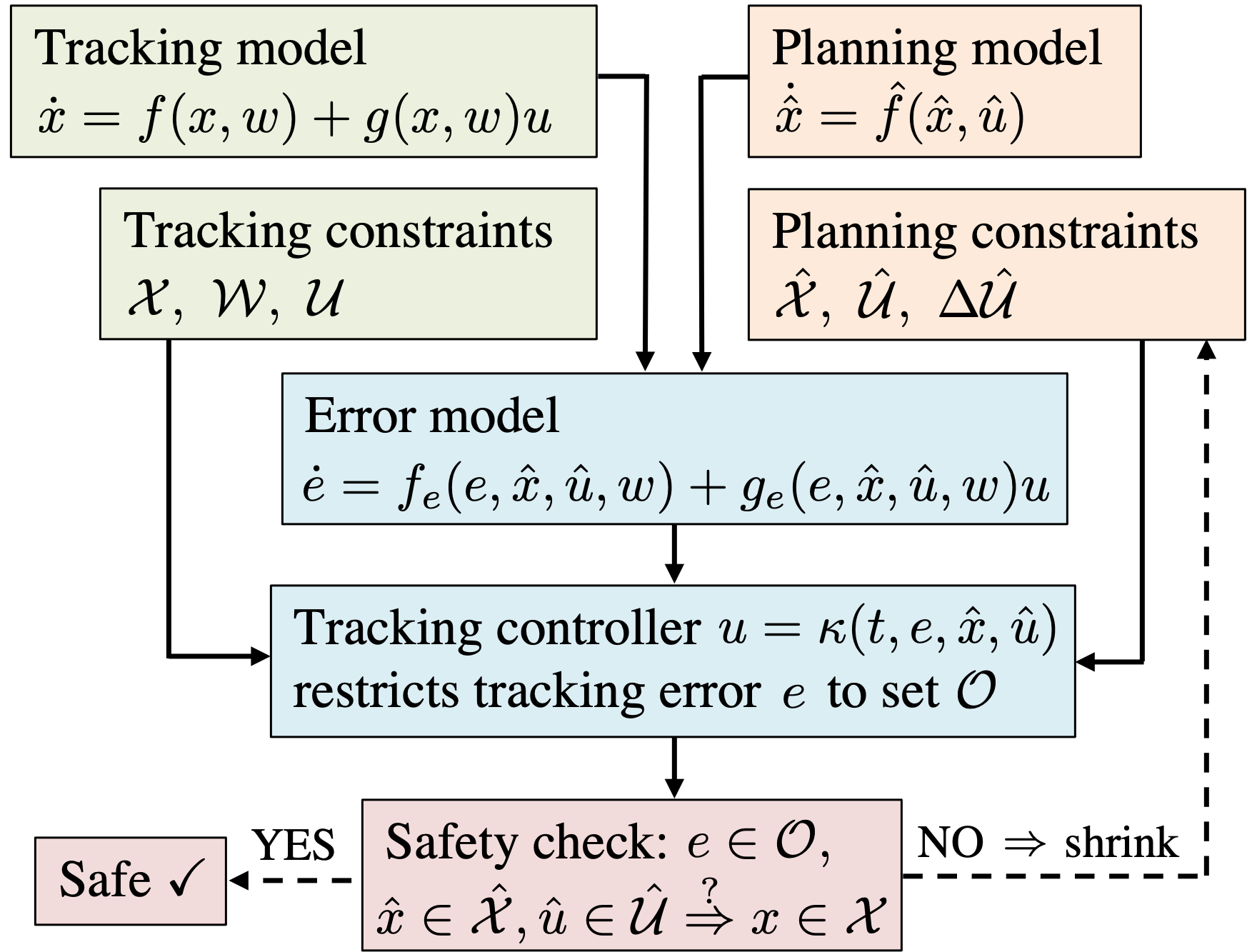}
    \caption{Offline synthesis. The tracking and planning models are combined to obtain a model for the error system. A controller $\kappa$ is derived to keep the error in the set $\mathcal{O}$ (e.g., using SOS). 
    If the safety condition is met,
    the tracking system is guaranteed to satisfy the state constraints $\mathcal{X}$.
    Otherwise, the sets $\hat{\mathcal{X}}$ and $\hat{\mathcal{U}}$ are shrunk and the process is repeated.}
    \end{subfigure}
    \caption{Online implementation and offline synthesis of the planner-tracker control scheme.}
    \label{fig: diagram}
\end{figure*}

\subsection*{Notation}
$\mathbb{S}^n$ denotes the set of $n$-by-$n$
symmetric matrices. $\mathbb{S}_+^n$ and $\mathbb{S}_{++}^n$ denote
the sets of $n$-by-$n$ symmetric positive semi-definite and positive
definite matrices, respectively. For $\xi \in \mathbb{R}^n$, $\mathbb{R}[\xi]$ represents the set of polynomials in $\xi$ with real coefficients, and $\R^{m}[\xi]$ and $\R^{m\times p}[\xi]$ denote all vector and
matrix valued polynomial functions. The subset $\Sigma[\xi] := \{p = p_1^2 + p_2^2 + ... + p_M^2 : p_1, ..., p_M \in \mathbb{R}[\xi]\}$ of $\mathbb{R}[\xi]$ is the set of sum-of-squares polynomials in $\xi$. Unless defined otherwise, notation $x^j$ denotes a variable $x$ used in the $j$'th iteration of an iterative algorithm. The symbol ``$\leq$'' represents component-wise inequality.

\section{Problem setup}\label{sec:prob_setup}
In this section we describe the hierarchical approach to safe-by-design trajectory planning and control that consists of two layers: a planning layer, which uses a low-fidelity system model, and a tracking layer, with a high-fidelity system model. 
The low-fidelity model might be a model with a lower state dimension than the high-fidelity model or a linearized model of the high-fidelity model to reduce the computational burden of planning. By analyzing the dynamics of the error between these two systems' states, we will show how we can bound this error by synthesizing an appropriate tracking controller. 
In this article, the controller and corresponding error bound are designed via sum-of-squares (SOS) programming.

The online implementation and offline synthesis of the planner-tracker control scheme are summarized in Figure~\ref{fig: diagram}.
We begin with a high-fidelity tracking model and a low-fidelity planning model, each with state and input constraints. 
Defining an appropriate error variable, $e$, between the two models, and using the error dynamics and the planner/tracker constraints, we design a tracking controller and derive a tracking error bound. This bound takes the form of a set $\mathcal{O}$ which is invariant under the closed loop error dynamics: $e(0)\in\mathcal{O} \ \Rightarrow \ e(t)\in\mathcal{O}$. 
If the planner constraints, when augmented by $\mathcal{O}$, still satisfy the tracking constraints, then the tracking system is safe: it will satisfy all constraints with the synthesized controller. Otherwise, the planner constraints are shrunk and the process is repeated. 

\subsection{High-Fidelity Tracking Model}
The high-fidelity model used is of the form:
\begin{equation}\label{eq:nonl_system}
\dot{x}(t) = f(x(t), w(t)) + g(x(t), w(t)) \cdot u(t),
\end{equation}
with state $x(t) \in \mathcal{X} \subseteq \R^{n_x}$,  disturbance 
$w(t) \in \mathcal{W} \subseteq \R^{n_w}$, bounded control $u(t) \in \mathcal{U} \subseteq \R^{n_u}$, $f : \R^{n_x} \times \R^{n_w} \to \R^{n_x}$, and $g : \R^{n_x}\times \R^{n_w} \to \R^{n_x} \times \R^{n_u}$. The sets
$\mathcal{X}$ and $\mathcal{U}$ are the constraint sets imposed on the states and control inputs in the high-fidelity model, respectively.

\subsection{Low-Fidelity Planning Model}
The low-fidelity model, which is a simplified version of \eqref{eq:nonl_system}, is of the form:
\begin{equation} \label{lowFidelity}
\dot{\hat{x}}(t) = \hat{f}(\hat{x}(t), \hat{u}(t)),
\end{equation}
where $\hat{x}(t) \in \hat{\mathcal{X}} \subseteq \R^{\hat{n}_x}$, $\hat{u}(t) \in \hat{\mathcal{U}} \subseteq \R^{\hat{n}_u}$, and $\hat{f} : \R^{\hat{n}_x \times \hat{n}_u} \to \R^{\hat{n}_x}$. The sets $\hat{\mathcal{X}}$ and $\hat{\mathcal{U}}$ are constraint sets enforced by the planning layer. The control input for the low-fidelity model, computed via the planning algorithm of choice, is assumed to be a zero-order hold signal with sampling time $T_s > 0$. 
This means: 
\begin{subequations}\label{eq:zero_order_uhat}
\begin{align} 
    &\hat{u}(t) = \hat{u}(\tau_k), \ \forall t \in [\tau_k, \tau_{k+1}), \ \text{with} \ \tau_k = k\cdot T_s, \label{eq:uhat_const} \\
    &\hat{u}(\tau_{k+1}) = \hat{u}(\tau_k) + \Delta\hat{u}(\tau_{k+1}), \label{eq:uhat_jump}
\end{align}
\end{subequations}
where 
$\Delta\hat{u}(t)$ is the periodic change in the control, restricted to a set $\Delta \hat{\mathcal{U}} \subseteq \R^{\hat{n}_u}$.
The zero-order hold behavior of the input will become important in Section \ref{sec:input_jump}, where it will necessitate additional analysis in order to provide a tracking error bound.



\begin{remark}
    Note that the planner-tracker synthesis framework is 
    applicable to any planning algorithm that is able to bound $\hat{x}(t), \hat{u}(t)$, and $\Delta \hat{u}(t)$. For example, this framework has been applied to different planning algorithms, using Nonlinear MPC in \cite{YinMoyACC, YinStan2019}, signal temporal logic (STL) in \cite{pant2020codesign}, and discrete abstraction in \cite{meyer2019continuous}. 
\end{remark}

\subsection{Error Dynamics}

The goal is to design a controller for system \eqref{eq:nonl_system} to track a reference trajectory planned using its approximation \eqref{lowFidelity}. In order to do so, we proceed by deriving the evolution of the error between \eqref{eq:nonl_system} and \eqref{lowFidelity}.  
Since $\hat{n}_x \leq n_x$ in general, we define a {$\mathcal{C}^1$} map $\pi : \R^{\hat{n}_x} \to \R^{n_x}$, called the \textit{comparison} map, and we define the tracking error as:
\begin{align}\label{eq:trac_err}
e(t) = x(t) - \pi(\hat{x}(t)).
\end{align}
If the planning model is simply a linearization, we may select $\pi$ to be the identity map, but our primary interest is in the case where $\hat{x}$ is of lower dimension and $\pi$ lifts it to the dimension of $x$.
We will first describe the method with this simple error definition to build intuition before generalizing the error definition in Section~\ref{sec:input_jump}, where $\pi$ is allowed to also depend on $\hat{u}$. 

Differentiating \eqref{eq:trac_err} with respect to time (dropping time arguments to improve readability), and eliminating the variable $x$, we obtain:
\begin{align}
\dot{e} &= \dot{x} - \frac{\partial \pi}{\partial \hat{x}} \cdot \dot{\hat{x}} \nonumber \\
&= \left. f(x,w) + g(x,w) \cdot u - \frac{\partial \pi}{\partial \hat{x}} \cdot \hat{f}(\hat{x}, \hat{u}) \right|_{x = e + \pi(\hat{x})}, \nonumber \\
&= f_e(e, \hat{x}, \hat{u}, w) + g_e(e, \hat{x}, w) \cdot u, \label{errorDyn}
\end{align}
where we have defined:
\begin{align}
f_e(e, \hat{x}, \hat{u},w) & = f(\pi(\hat{x}) + e,w) - \frac{\partial \pi}{\partial \hat{x}} \cdot \hat{f}(\hat{x}, \hat{u}), \nonumber \\
g_e(e, \hat{x},w) & = g(\pi(\hat{x}) + e,w). \label{errorAffine}
\end{align}

\begin{assumption}\label{ass:initial}
The initial condition of error-state, $e(0)$, starts within a set $\mathcal{E}_0  \subset \R^{n_x}$, i.e., $e(0) \in \mathcal{E}_0$. 
\end{assumption}
For this paper, we consider a parameterization of the tracking controller given by: 
\begin{equation} \label{lowLevel}
u(t) = \kappa(e(t), \hat{x}(t), \hat{u}(t)),~\kappa \in \mathcal{K}_\mathcal{U}
\end{equation}
where the set $\mathcal{K}_{\mathcal{U}}:= \{\kappa: \R^{n_x} \times \R^{\hat{n}_x} \times \R^{\hat{n}_u} \rightarrow \mathcal{U}\}$ defines a set of admissible error-state feedback control laws. The tracking controller \eqref{lowLevel} is to be designed such that $e(t) \in \mathcal{E}$ for a bounded set $\mathcal{E}$, for all $t \geq 0$, evolving with the dynamics \eqref{errorDyn} and \eqref{lowLevel}. This set $\mathcal{E}$ is called the Robust Infinite-Time Forward Reachable Set of $\mathcal{E}_0$, and is formally defined next. 
\begin{definition}[Robust Infinite-Time Forward Reachable Set]\label{def:TEB}
Consider \eqref{errorDyn} in closed-loop with \eqref{lowLevel} for all $t \ge 0$ as: 
\begin{align} \label{eq:err_closedloop}
    &\dot{e} = f_e(e,\hat{x},\hat{u}, w) + g_e(e,\hat{x}, w)\cdot \kappa(e, \hat{x},\hat{u}), 
\end{align}
with $\hat{x}$, $\hat{u}$, and $w$ constrained by $\hat{\mathcal{X}}$, $\hat{\mathcal{U}}$, and $\mathcal{W}$. Then a \emph{robust infinite-time forward reachable set} $\mathcal{E}$ of $\mathcal{E}_0$ is defined as: 
\begin{align*}
    &\mathcal{E} := \{e(t) \in \R^{n_x}: \exists \ e(0) \in \mathcal{E}_0, \ \hat{x}: \R_+ \rightarrow \hat{\mathcal{X}},~ \hat{u}: \R_+ \rightarrow \hat{\mathcal{U}}, \\
    &~~~~~~~~~~w: \R_+ \rightarrow \mathcal{W},\ t \ge 0, ~ \text{s.t.} ~ e(t) ~ \text{is a solution to}~ \eqref{eq:err_closedloop}\}.
\end{align*}
\end{definition}
As computing the robust infinite-time forward reachable set $\mathcal{E}$
is intractable in general,
we find a tracking control law $\kappa$ and compute an outer-bound $\mathcal{O} \supseteq \mathcal{E}$. We refer to $\mathcal{O}$ as a {``tracking error bound'' (TEB)}, and $\kappa$ as the corresponding ``tracking controller". {A stylized depiction of $\mathcal{E}_0$, $\mathcal{E}$, and $\mathcal{O}$ is shown in Figure~\ref{fig: TEB}.}
\begin{figure}[tbh]
    \centering
    \includegraphics[width=0.25 \textwidth]{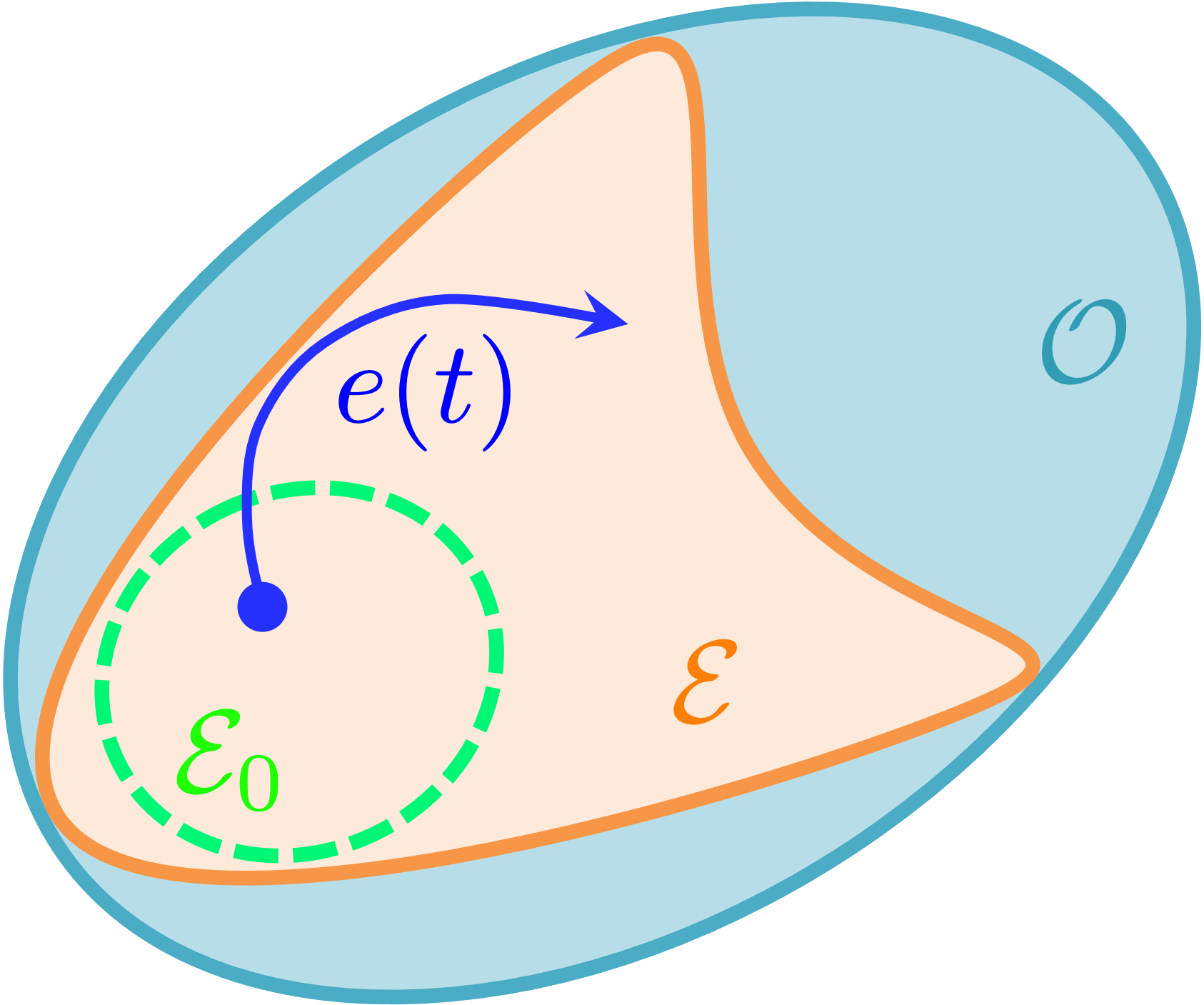}
    \caption{Illustration of Definition~\ref{def:TEB}, with initial error set $\mathcal{E}_0$, error trajectory $e(t)$, and robust infinite-time forward reachable set $\mathcal{E}$. The TEB $\mathcal{O}\supseteq\mathcal{E}$ is also shown.}
    \label{fig: TEB}
\end{figure}

{As we will see in the next subsection, we aim to minimize the volume of the set $\mathcal{O}$ when designing the tracking controller $\kappa$; however, we do not emphasize asymptotic behavior of the error $e(t)$  since we do not need perfect tracking of the planning model. Indeed we allowed the dynamics (\ref{eq:err_closedloop}) to depend on $\hat{x}$ besides $\hat{u}$ and $w$, and we did not require the right-hand side to vanish when $e=0$. The benefit of this relaxed approach, as alluded to in the Introduction, is to remove restrictive geometric constraints from the selection of the map $\pi$ and controller $\kappa$ that would render the set $e=x-\pi(\hat{x})=0$ invariant and attractive.} 

\subsection{Computing the TEB and Tracking Controller}
The TEB $\mathcal{O}$ and the tracking controller $\kappa$ can be obtained with the help of the following theorem.  
\begin{theorem} \label{thm1}
Given the error dynamics \eqref{errorDyn} with mapping $f_e : \R^{n_x} \times \R^{\hat{n}_x} \times \R^{\hat{n}_u} \times \R^{n_w} \to \R^{n_x}$, $g_e : \R^{n_x} \times \R^{\hat{n}_x} \times \R^{n_w} \to \R^{n_x}$, $\gamma \in \R$, $\hat{\mathcal{X}} \subseteq \R^{\hat{n}_x}$, $\hat{\mathcal{U}} \subseteq \R^{\hat{n}_u}$, and $\mathcal{W} \subseteq \R^{n_w}$, if there exists a $\mathcal{C}^1$ function $V: \R^{n_x} \to \R$ and $\kappa : \R^{n_x} \times \R^{\hat{n}_x} \times \R^{\hat{n}_u} \to \R^{n_u}$ such that
\begin{subequations}\label{eq:them1cond}
\begin{align}
& \mathcal{E}_0 \subseteq \{e \in \R^{n_x} : V(e) \leq \gamma\}, \label{eq:thm1cond1}\\
&\frac{\partial V(e)}{\partial e} \cdot \left( f_e(e, \hat{x}, \hat{u}, w) + g_e(e, \hat{x}, w) \cdot \kappa(e, \hat{x}, \hat{u}) \right) <  0, \nonumber \\
& \forall e, \hat{x}, \hat{u}, w, \ \text{s.t.} \ V(e) = \gamma, \ \hat{x} \in \hat{\mathcal{X}},  \ \hat{u} \in \hat{\mathcal{U}},\ w \in \mathcal{W}, \label{eq:thm1cond2} \\
& \{e \in \R^{n_x} : V(e) \leq \gamma\} \subseteq \{e \in \R^{n_x}: \kappa(e,\hat{x},\hat{u}) \in \mathcal{U}\}, \nonumber \\
& \forall (\hat{x},\hat{u}) \in \hat{\mathcal{X}} \times \hat{\mathcal{U}}
\end{align}
\end{subequations}
hold, then the sublevel set $\Omega(V, \gamma) := \{e \in \mathbb{R}^{n_x}:V(e)\leq \gamma \}$ is a TEB, denoted by $\mathcal{O}$, achieved by the tracking control law $\kappa$.
\end{theorem}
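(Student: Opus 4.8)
The plan is to show that the sublevel set $\Omega(V,\gamma)$ contains the robust infinite-time forward reachable set $\mathcal{E}$ of $\mathcal{E}_0$, since by Definition~\ref{def:TEB} this is exactly what it means for $\Omega(V,\gamma)$ to be a TEB. As $\mathcal{E}$ consists of every error state reachable at some $t\ge 0$ from some $e(0)\in\mathcal{E}_0$ under \emph{some} admissible $\hat{x}(\cdot)\in\hat{\mathcal{X}}$, $\hat{u}(\cdot)\in\hat{\mathcal{U}}$, $w(\cdot)\in\mathcal{W}$, the inclusion $\mathcal{E}\subseteq\Omega(V,\gamma)$ reduces to proving that $\Omega(V,\gamma)$ is forward invariant for the closed loop \eqref{eq:err_closedloop} under every such signal. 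Condition \eqref{eq:thm1cond1} supplies the initialization $\mathcal{E}_0\subseteq\Omega(V,\gamma)$, and the third (input-admissibility) condition of \eqref{eq:them1cond} guarantees that whenever $e\in\Omega(V,\gamma)$ we have $\kappa(e,\hat{x},\hat{u})\in\mathcal{U}$, so that $\kappa\in\mathcal{K}_\mathcal{U}$ is genuinely admissible along the trajectory and \eqref{eq:err_closedloop} is the closed loop actually realized.

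First I would fix arbitrary admissible signals $\hat{x}(\cdot),\hat{u}(\cdot),w(\cdot)$ and an initial condition $e(0)\in\mathcal{E}_0$, and consider the resulting (absolutely continuous) solution $e(\cdot)$ of \eqref{eq:err_closedloop}. Because $V$ is $\mathcal{C}^1$ and $e(\cdot)$ is continuous, $t\mapsto V(e(t))$ is continuous and, by \eqref{eq:thm1cond1}, starts at $V(e(0))\le\gamma$. I would then argue by contradiction: assume the trajectory leaves $\Omega(V,\gamma)$ and set the first exit time $t^\star := \inf\{t\ge 0 : V(e(t))>\gamma\}$. Continuity of $V\circ e$ forces $V(e(t^\star))=\gamma$, because the value stays $\le\gamma$ on $[0,t^\star)$ yet is approached from above along a sequence $t_n\downarrow t^\star$ with $V(e(t_n))>\gamma$; thus $e(t^\star)$ lies on the boundary $\{V=\gamma\}$.

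The crux is to evaluate the one-sided time derivative of $V\circ e$ at $t^\star$. Since $\hat{x}(t^\star)\in\hat{\mathcal{X}}$, $\hat{u}(t^\star)\in\hat{\mathcal{U}}$, $w(t^\star)\in\mathcal{W}$ and $V(e(t^\star))=\gamma$, the robust (worst-case-over-signals) decrease condition \eqref{eq:thm1cond2} applies pointwise at $t^\star$ and gives $\frac{d}{dt}V(e(t))\big|_{t^\star} = \frac{\partial V}{\partial e}\cdot\big(f_e+g_e\,\kappa\big)\big|_{t^\star} < 0$. Hence $V(e(t))$ is strictly decreasing through $t^\star$, so $V(e(t))<\gamma$ for $t$ slightly larger than $t^\star$, contradicting the existence of the sequence $t_n\downarrow t^\star$ with $V(e(t_n))>\gamma$. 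Therefore no exit occurs and $e(t)\in\Omega(V,\gamma)$ for all $t\ge 0$. Since the signals and $e(0)\in\mathcal{E}_0$ were arbitrary, every point of $\mathcal{E}$ lies in $\Omega(V,\gamma)$, i.e. $\mathcal{O}:=\Omega(V,\gamma)\supseteq\mathcal{E}$ is a TEB achieved by $\kappa$, as claimed.

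I expect the main obstacle to be the time-regularity at the crossing instant: the zero-order-hold input $\hat{u}$ and a possibly discontinuous disturbance $w$ make the right-hand side of \eqref{eq:err_closedloop} only piecewise continuous in $t$, so $V\circ e$ need not be classically differentiable everywhere. This is handled by working with Carath\'eodory (absolutely continuous) solutions and replacing the derivative at $t^\star$ with the upper-right Dini derivative; the strict inequality in \eqref{eq:thm1cond2} makes this Dini derivative negative, which is all the contradiction needs. Crucially, because here the comparison map $\pi$ depends only on $\hat{x}$, the error $e$ itself does not jump when $\hat{u}$ jumps, so $e(\cdot)$ stays continuous and the boundary-crossing argument avoids the jump analysis that Section~\ref{sec:input_jump} will require.
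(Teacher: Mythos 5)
Your proposal is correct and follows essentially the same route as the paper's own proof: a contradiction argument at a boundary-crossing time where condition \eqref{eq:thm1cond2} forces $\tfrac{d}{dt}V(e(t))<0$, which is incompatible with the trajectory exiting the sublevel set. Your version is slightly more careful about regularity (first exit time, Dini derivatives for the zero-order-hold input), but the core idea is identical.
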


{
\begin{proof}

The theorem is proved by contradiction. Assume there exist a time $t_2 >0$, an initial condition $e_0 \in \mathcal{E}_0$, and a trajectory $e(\cdot)$ such that $e(0) = e_0$, and $V(e(t_2)) > \gamma$. Since $V(e(0)) \leq \gamma$ from \eqref{eq:thm1cond1}, by continuity of $V$ there exists $t_1$ such that $0 \leq t_1 < t_2$, $V(e(t_1)) = \gamma$, and $\frac{d}{dt}V(e(t))|_{t=t_1} \geq 0$. (If all crossings of $V(e(t)) = \gamma $ satisfied $\frac{d}{dt}V(e(t)) < 0$, then $V$ would not be continuous.) This contradicts~(\ref{eq:thm1cond2}). 
\end{proof}
}

Finding generic functions $V$ and $\kappa$ that satisfy constraints~\eqref{eq:them1cond} is a difficult problem. 
Below we show how SOS programming can be used to search for these functions by restricting to polynomial candidates $V \in \R[e]$ and $\kappa \in \R^{n_u}[(e,\hat{x},\hat{u})]$. Besides this restriction, we make the following assumption:
\begin{assumption}\label{ass:poly}
The mappings $f_e \in \R^{n_x}[(e,\hat{x},\hat{u},w)]$ and $g_e \in \R^{n_x \times n_u}[(e,\hat{x},w)]$ in error dynamics~\eqref{errorDyn} are polynomials. Sets $\mathcal{E}_0$, $\hat{\mathcal{X}}$, $\hat{\mathcal{U}}$, and $\mathcal{W}$ are semi-algebraic sets, i.e., there exists $p_0 \in \R[e]$ such that $\mathcal{E}_0 = \{e \in \R^{n_x}~:~p_0(e) \leq 0\}$; with similar definitions for $\hat{\mathcal{X}}$, $\hat{\mathcal{U}}$, and $\mathcal{W}$ with polynomials $p_{\hat x} \in \R[\hat{x}]$, $p_{\hat u} \in \R[\hat{u}]$, and $p_w \in \R[w]$. The control constraint set $\mathcal{U}$ is a hypercube $\mathcal{U} = \{u \in \R^{n_u}: \underline{u} \leq u \leq \overline{u}\}$, where $\underline{u}, \overline{u} \in \R^{n_u}$.
\end{assumption}


By applying the generalized S-procedure~\cite{Parrilo:00} to the set containment constraints~\eqref{eq:them1cond}, and using the volume of $\Omega(V, \gamma)$ as the cost function to minimize, we obtain the following SOS optimization problem for finding $V$ and $\kappa$:
\begingroup
\allowdisplaybreaks
\begin{subequations} \label{eq:sosopt1}
\begin{align}
\displaystyle \min_{V, \kappa, s,l} ~&
        \text{volume}(\Omega(V, \gamma)) \vspace{2mm} \nonumber \\ 
        \ \  \text{s.t.} ~
    &s_0 \in \Sigma[e], s_{1\rightarrow3} \in \Sigma[(e,\hat{x},\hat{u},w)], l \in \R[(e,\hat{x},\hat{u},w)] \nonumber \\
    & s_{4 \rightarrow 9, i} \in \Sigma[(e,\hat{x},\hat{u})], i \in \{1, ..., n_u\}  \label{eq:sos_cond0}\\
    &-(V(e) - \gamma) + s_0\cdot p_0 \in \Sigma[e], \label{eq:sos_cond1}\\
    &-\frac{\partial V}{\partial e}\cdot(f_e+g_e \cdot \kappa) - \epsilon e^\top e +l \cdot (V - \gamma) + s_1 \cdot p_{\hat x} \nonumber \\
    & ~~~~~~ +s_2 \cdot p_{\hat u} + s_3 \cdot p_w \in \Sigma[(e,\hat{x},\hat{u},w)], \label{eq:sos_cond2} \\
    &\overline{u}_i - \kappa_i + s_{4,i}\cdot (V - \gamma)+ s_{5,i} \cdot p_{\hat{x}} \nonumber \\
    & ~~~~~~  + s_{6,i} \cdot p_{\hat{u}} \in \Sigma[(e,\hat{x},\hat{u})],  i \in \{1, ..., n_u\}, \label{eq:sos_cond3} \\
    &\kappa_i - \underline{u}_i + s_{7,i}\cdot (V - \gamma)+ s_{8,i} \cdot p_{\hat{x}} \nonumber \\
    & ~~~~~~  + s_{9,i} \cdot p_{\hat{u}} \in \Sigma[(e,\hat{x},\hat{u})], i \in \{1, ..., n_u\}. \label{eq:sos_cond4}
\end{align}
\end{subequations}
\endgroup
In the formulation above, SOS polynomials $s_{1\rightarrow3}$ and $s_{4\rightarrow9,i}$ are multipliers used in the generalized S-procedure, and $\epsilon > 0$ is on the order of $10^{-6}$. The optimization~\eqref{eq:sosopt1} is non-convex as there are two groups of decision variables $V$ and $(\kappa, l, s_{4,i}, s_{7,i})$ bilinear in each other. To tackle this problem, similarly to \cite[Algorithm 1]{YinBackward19}, we decompose it into two tractable subproblems to iteratively search between the two groups of decision variables, as shown in Algorithm~\ref{alg:alg1} in the Appendix.

\subsection{Safety Check}
After synthesizing $V$ and $\kappa$, we check the following safety condition with $\mathcal{O} = \Omega(V,\gamma)$:
\begin{align}\label{eq:safetyCond1}
    \pi(\hat{\mathcal{X}}) \oplus \mathcal{O} \subseteq \mathcal{X}.
\end{align}
If (\ref{eq:safetyCond1}) is satisfied, then the tracker state $x$ is guaranteed to satisfy state constraints $\mathcal{X}$ and the design is considered successful. If (\ref{eq:safetyCond1}) is \textit{not} satisfied, we shrink the planner sets $\hat{\mathcal{X}}$ and $\mathcal{U}$ and repeat the process as indicated in Figure~\ref{fig: diagram}. 

\section{Generalized Tracking Error Definition} \label{sec:input_jump}
So far, we have used a map $\pi$ that only depends on the planning state $\hat{x}$ in \eqref{eq:trac_err}. However, as illustrated in the example below, this map may fail to provide reference signals for all the tracker states. 
Therefore, in Section~\ref{subsec:modError},
we move to a more general error definition that also depends on the planner input $\hat{u}$. 

\begin{example} \label{ex:integrators}
As a simple illustration of why using a kinematic model for planning necessitates a more general error definition, consider the tracking model
\begin{align}
    x = \begin{bmatrix}s\\v\end{bmatrix}, \hspace{1em} \dot{x} =\begin{bmatrix}v\\u\end{bmatrix}, 
\end{align}
where $s$ is the position, $v$ is the velocity, and $u$ is the acceleration input. Let the planning model be a single integrator, where the only state is the planner position ($\hat{x} = \hat{s}$) and the input is the planner velocity ($\dot{\hat{x}} = \hat{v} =: \hat{u}$). Then, letting $\pi(\hat{x},\hat{u}) = [\hat{x};\hat{u}]$,
the error is 
\begin{align}
    e =  x - \pi(\hat{x},\hat{u}) = \begin{bmatrix} s-\hat{s}\\ v - \hat{v}\end{bmatrix},
\end{align}
which is the deviation of the planner and tracker positions and velocities. Thus, by keeping $e$ small, we keep the planner and tracker positions and velocities close to one another, which is desirable. 
On the other hand, if we had used the na{\"i}ve map $\pi(\hat{x}) = [\hat{x}; 0]$, the error would be 
$[(s-\hat{s}) \ \ v]^\top$, 
and bounding the error would mean keeping $v$ close to zero, which is overly conservative and may not align with planning objectives.
\end{example}

\subsection{Modified Error Dynamics}\label{subsec:modError}
As motivated above, we will use a more general {$\mathcal{C}^1$} map $\pi: \R^{\hat{n}_x}\times\R^{\hat{n}_u} \rightarrow \R^{n_x}$ to 
provide better reference trajectories
for the tracking model, 
as was done in~\cite{meyer2019continuous} for the first time. For further generality, in this article we redefine the error state as 
\begin{align}\label{eq:error_xhat_uhat}
    e = \phi(x,\hat{x},\hat{u})(x-\pi(\hat{x},\hat{u})),
\end{align}
where we add the $\mathcal{C}^1$ map $\phi:  \R^{{n}_x}\times\R^{\hat{n}_x}\times\R^{\hat{n}_u} \rightarrow \R^{n_x\times n_x}$ which provides additional flexibility, as will be demonstrated in Section~\ref{examples}.

Assume that for each $e$, $\hat{x}$, $\hat{u}$, there exists a unique $x$ satisfying (\ref{eq:error_xhat_uhat}), and denote this inverse as 
\begin{align}
    x = \nu(e,\hat{x},\hat{u}).
\end{align}
The error dynamics resulting from (\ref{eq:error_xhat_uhat}) are 
\begin{align}
    \label{eq:error_dyn2}
    \dot{e} = f_e(e, \hat{x}, \hat{u},w) + g_e(e, \hat{x}, \hat{u},w) u - h_e(e,\hat{x},\hat{u})\dot{\hat{u}},
\end{align}
where 
\begin{align}
    &{f}_e(e,\hat{x},\hat{u},w):=  \left\{\frac{\partial \phi}{\partial x}f(x,w) + \frac{\partial \phi}{\partial \hat{x}}\hat{f}(\hat{x},\hat{u})\right\}(x-\pi(\hat{x},\hat{u})) \nonumber\\
    &~~~~\ + \phi(x,\hat{x},\hat{u}) \left\{f(x,w) - \frac{\partial\pi}{\partial\hat{x}}\hat{f}(\hat{x},\hat{u})\right\}\bigg\vert_{x = \nu(e,\hat{x},\hat{u})}, \label{eq:fe}\\
    &{g}_e(e,\hat{x},\hat{u},w):= \bigg\{\frac{\partial \phi}{\partial x}(x-\pi(\hat{x},\hat{u})) \label{eq:ge}\\
    &~~~~~~~~~~~~~~~~~~~~~~~~~~~~~~\ + \phi(x,\hat{x},\hat{u})\bigg\} g(x,w) \bigg\vert_{x = \nu(e,\hat{x},\hat{u})}, \nonumber
\end{align}
and $h_e$ can be computed but is not written explicitly since it multiplies $\dot{\hat{u}}$, which is zero within sampling periods.

\subsection{Analysis Within Sampling Periods}
Note that the planner input is applied in a zero order hold fashion 
within each sampling period
as described in \eqref{eq:zero_order_uhat}. As the tracking error dynamics in \eqref{eq:error_dyn2} has a term containing $\dot{\hat{u}}$ (unlike \eqref{errorDyn}), these dynamics change discontinuously at each sampling instant $\tau_k$. Therefore, instead of considering a tracking controller for all times, we consider only a time interval between any two sampling instants. {For additional flexibility we consider a time-varying tracking controller.} 
Since the signal $\hat{u}$ is piece-wise constant, we thus have 
\begin{align}
\dot{\hat{u}}(t) = 0, \ \forall t \in [\tau_k, \tau_{k+1}). \label{eq:uhatdot}
\end{align}
Therefore, the error dynamics~\eqref{eq:error_dyn2} during the time interval $[\tau_k, \tau_{k+1})$ are:
\begin{equation}
    \label{eq:error_dyn3}
    \dot{e} = f_e(e, \hat{x}, \hat{u},w) + g_e(e, \hat{x}, \hat{u},w) u.
\end{equation}
Given the bounded set of initial conditions $\mathcal{E}_0$, we want to enforce the boundedness of the error state during $[0, T_s)$ by introducing a tracking controller
\begin{align}
\label{eq: TVcontroller}
u(t) = \kappa(t, e(t), \hat{x}(t), \hat{u}(t)),
\end{align}
which is now defined by a \emph{time-varying}, error-state feedback control law $\kappa: \R \times \R^{n_x} \times \R^{\hat{n}_x} \times \R^{\hat{n}_u} \rightarrow \R^{n_u}$.
Below, we provide 
the design requirements on $\kappa$ to obtain such an error bound. 
\begin{proposition}
\label{prop time varying}
Given the error dynamics~\eqref{eq:error_dyn3} with mappings $f_e: \mathbb{R}^{n_x} \times \mathbb{R}^{\hat{n}_x} \times \mathbb{R}^{\hat{n}_u} \times \R^{n_w} \rightarrow \mathbb{R}^{n_x}$, $g_e: \mathbb{R}^{n_x} \times \mathbb{R}^{\hat{n}_x} \times \R^{\hat{n}_u}\times \R^{n_w} \rightarrow \mathbb{R}^{n_x}$, 
and $\gamma \in \R$, $T_s > 0$, $\hat{\mathcal{X}} \subseteq \mathbb{R}^{\hat{n}_x}$, $\hat{\mathcal{U}} \subseteq \mathbb{R}^{\hat{n}_u}$, $\mathcal{W} \subseteq \mathbb{R}^{n_w}$, if there exists a $\mathcal{C}^1$ function $V: \R \times \mathbb{R}^{n_x} \rightarrow \mathbb{R}$, and $\kappa: \R \times \mathbb{R}^{n_x} \times \mathbb{R}^{\hat{n}_x} \times \mathbb{R}^{\hat{n}_u} \rightarrow \mathbb{R}^{n_u}$, such that 
\begin{subequations}
\begin{align}
    &\mathcal{E}_0 \subseteq \{e \in \R^n: V(0,e)\leq \gamma\}, \label{eq:V_cond0}\\
    & \frac{\partial V(t,e)}{\partial e}\cdot (f_e(e, \hat{x}, \hat{u},w)+ g_e(e, \hat{x}, \hat{u},w)\cdot \kappa(t, e,\hat{x},\hat{u}))\nonumber \\
    &\ ~~~~ + \frac{\partial V(t,e)}{\partial t} < 0, \ \forall t, e, \hat{x}, \hat{u}, w, \ \text{s.t.} \ t\in[0,T_s), \nonumber \\
	&\ \ ~~~~ V(t, e) = \gamma, \ \hat{x} \in \hat{\mathcal{X}}, \ \hat{u} \in \hat{\mathcal{U}},\ w \in \mathcal{W}, \label{eq:V_cond1} \\
	& \{e \in \R^{n_x} : V(t, e) \leq \gamma\} \subseteq \{e \in \R^{n_x}: \kappa(t, e,\hat{x},\hat{u}) \in \mathcal{U}\}, \nonumber \\
	&\ \ ~~~~ \forall (t, \hat{x},\hat{u}) \in [0, T_s)\times  \hat{\mathcal{X}} \times \hat{\mathcal{U}} \label{eq:V_cond2}
	\end{align}
\end{subequations}
then for all $e(0) \in \mathcal{E}_0$, we have $e(t) \in \{e \in \R^{n_x}: V(t,e)\leq \gamma\}$, for all $t \in [0,T_s)$.
\end{proposition}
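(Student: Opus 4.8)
The plan is to adapt the contradiction argument used for Theorem~\ref{thm1} to the time-varying setting, the only new ingredient being that the total time derivative of $V$ along trajectories now picks up an explicit $\partial V/\partial t$ term via the chain rule. Accordingly, I would introduce the scalar function $W(t) := V(t, e(t))$, where $e(\cdot)$ is the closed-loop solution of \eqref{eq:error_dyn3} under the time-varying controller \eqref{eq: TVcontroller}, so that
\begin{align*}
\dot{W}(t) = \frac{\partial V(t,e(t))}{\partial t} + \frac{\partial V(t,e(t))}{\partial e}\cdot\bigl(f_e(e,\hat{x},\hat{u},w) + g_e(e,\hat{x},\hat{u},w)\cdot\kappa(t,e,\hat{x},\hat{u})\bigr).
\end{align*}
The point is that the left-hand side of \eqref{eq:V_cond1} is exactly $\dot{W}(t)$ evaluated where $V(t,e)=\gamma$; hence condition \eqref{eq:V_cond1} states precisely that $\dot{W}(t) < 0$ whenever $W(t) = \gamma$ (for admissible $\hat{x},\hat{u},w$).

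First I would fix $e(0)\in\mathcal{E}_0$ and use \eqref{eq:V_cond0} to record $W(0) = V(0,e(0)) \le \gamma$. Then I would argue by contradiction: suppose there is $t_2 \in [0,T_s)$ with $W(t_2) > \gamma$. Setting $t_1 := \sup\{t\in[0,t_2] : W(t)\le\gamma\}$, continuity of $W$ (which holds since $V\in\mathcal{C}^1$ and $e(\cdot)$ is a differentiable solution) gives $t_1 < t_2$, $W(t_1)=\gamma$, and $W(t) > \gamma$ for all $t\in(t_1,t_2]$. This last property forces the right-derivative $\dot{W}^+(t_1) \ge 0$, and since $W$ is differentiable this yields $\dot{W}(t_1) \ge 0$.

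Finally, because $\hat{x}(t_1)\in\hat{\mathcal{X}}$, $\hat{u}(t_1)\in\hat{\mathcal{U}}$, $w(t_1)\in\mathcal{W}$, and $W(t_1)=\gamma$, condition \eqref{eq:V_cond1} applies at $t=t_1$ and gives $\dot{W}(t_1) < 0$, contradicting $\dot{W}(t_1)\ge 0$. This rules out any escape and establishes $e(t)\in\{e:V(t,e)\le\gamma\}$ on $[0,T_s)$. The admissibility condition \eqref{eq:V_cond2} plays the same supporting role as in Theorem~\ref{thm1}: it guarantees $\kappa(t,e,\hat{x},\hat{u})\in\mathcal{U}$ while $e$ remains in the sublevel set, so that the closed-loop dynamics \eqref{eq:error_dyn3} with this controller are the ones actually realized.

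I expect the only delicate step to be the first-exit/one-sided-derivative argument at $t_1$, i.e., justifying $\dot{W}(t_1)\ge 0$ rigorously rather than merely asserting a crossing with nonnegative slope as the parenthetical remark in Theorem~\ref{thm1} does. Everything else is a mechanical transcription of the earlier proof, with the explicit time dependence absorbed cleanly into the $\partial V/\partial t$ term; no new structural difficulty arises from restricting attention to the single sampling interval $[0,T_s)$ on which $\dot{\hat{u}}=0$.
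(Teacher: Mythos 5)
Your proof is correct and follows essentially the same route as the paper, which simply states that the proposition is proved by repeating the contradiction argument of Theorem~\ref{thm1}; your only additions are to make the chain rule for $W(t)=V(t,e(t))$ explicit (absorbing the $\partial V/\partial t$ term) and to justify the nonnegative slope at the first exit time $t_1$ more carefully via the supremum construction. Both refinements are sound and merely tighten the parenthetical continuity remark in the paper's Theorem~\ref{thm1} proof.
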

\begin{proof}
The proof follows  the proof of Theorem~\ref{thm1}.
\end{proof}

Define the funnel $\Omega(V,t,\gamma):= \{e \in \R^{n_x}:V(t,e)\leq \gamma\}$. Then Proposition~\ref{prop time varying} can be restated as 
\begin{align}
    e(0) \in \mathcal{E}_0 \subseteq \Omega(V,0,\gamma) \Rightarrow e(t) \in \Omega(V,t,\gamma), \ \forall t \in [0, T_s).\nonumber
\end{align}

\begin{remark} \label{remark:shift_V}
Although Proposition~\ref{prop time varying} is stated for the first sampling period $[0,T_s)$, it can be used for any other sampling period $[\tau_k,\tau_{k+1})$ with $\tau_k=k \cdot T_s$. Let $e(\tau_{k}) \in \Omega(V,0,\gamma)$. Then we have $e(\tau_{k}+t) \in \Omega(V,t,\gamma)$, for all $t \in [0, T_s)$, under the control signal $u(\tau_k + t)=\kappa(t,e(\tau_k+t),\hat{x}(\tau_k+t),\hat{u}(\tau_k+t)))$.
\end{remark}

\subsection{Analysis Across Sampling Periods}
Next, we focus on the effect of the input jump $\Delta\hat{u}$ at 
each sampling instant $\tau_k$
as in \eqref{eq:uhat_jump}.
From (\ref{eq:error_xhat_uhat}), $\Delta\hat{u}$ induces a jump on the error. 
Let $\tau_{k}^-$ and $\tau_{k}^+$ denote sampling instant $\tau_{k}$ before and after the discrete jump, respectively, {and for simplicity use the notation $e^+_k := e(\tau_k^+)$ and so on.} Then we have
{
\begin{align}
    e^+_k &= \phi(x^+_k,\hat{x}^+_k,\hat{u}^+_k)(x^+_k - \pi(\hat{x}^+_k,\hat{u}^+_k)) \nonumber\\
    &= \phi(x^-_k,\hat{x}^-_k,\hat{u}^-_k + \Delta\hat{u}^+_k)(x^-_k - \pi(\hat{x}^-_k,\hat{u}^-_k + \Delta\hat{u}^+_k)) \nonumber\\
    &= \phi(\nu(e^-_k,\hat{x}^-_k,\hat{u}^-_k + \Delta\hat{u}^+_k),\hat{x}^-_k,\hat{u}^-_k + \Delta\hat{u}^+_k) \nonumber\\
    &~~~~~~~~~~\ \cdot (\nu(e^-_k,\hat{x}^-_k,\hat{u}^-_k + \Delta\hat{u}^+_k) - \pi(\hat{x}^-_k,\hat{u}^-_k + \Delta\hat{u}^+_k)) \nonumber\\
    &=: h(e^-_k,\hat{x}^-_k,\hat{u}^-_k,\Delta\hat{u}^+_k). \label{eq:jump}
\end{align}
We refer to $h$ as the \textit{jump function}, as it reflects how the error may jump from the end of one sampling period to the beginning of the next, due to the jump in the input.

We introduce the additional condition below to characterize the error jump induced by the control jump $\Delta\hat{u}$ in terms of the funnel $\Omega(V,t,\gamma)$.


\begin{proposition} \label{prop_jump_general}
Given $\gamma \in \R$, $T_s\in\R$,  
$\hat{\mathcal{X}}\subseteq\mathbb{R}^{\hat{n}_x}$, $\hat{\mathcal{U}}\subseteq\mathbb{R}^{\hat{n}_u}$, $\Delta \hat{\mathcal{U}} \subseteq \R^{\hat{n}_u}$, 
$h:\mathbb{R}^{\hat{n}_x}\times\mathbb{R}^{\hat{n}_x}\times\mathbb{R}^{\hat{n}_u}\times\mathbb{R}^{\hat{n}_u}\rightarrow\mathbb{R}^{n_x}$, 
if there exists a function $V: \R \times \R^{n_x} \rightarrow \R$ satisfying
\begin{align}\label{eq:V_cond3} 
    &V(0, h(e,\hat{x},\hat{u},\Delta\hat{u})) \leq \gamma, \\
    &~~~~~~~~\ \forall \hat{x}\in\hat{\mathcal{X}},\hat{u}\in \hat{\mathcal{U}}, \Delta\hat{u} \in \Delta \hat{\mathcal{U}} 
    \text{ and } \forall e \text{ s.t.} \ V(T_s,e) \leq \gamma \nonumber 
\end{align}
then for all $e^-_k \in \Omega(V,T_s,\gamma)$, $e^+_k \in \Omega(V,0,\gamma)$. 
\end{proposition}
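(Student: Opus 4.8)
The plan is to prove this directly, by substitution, rather than by the continuity/contradiction argument used for Theorem~\ref{thm1} and Proposition~\ref{prop time varying}. Those statements concern the continuous evolution of $V$ along a trajectory and therefore hinge on a time-derivative sign condition; by contrast, Proposition~\ref{prop_jump_general} only asserts that a single discrete map sends one sublevel set into another. No continuity argument, and in fact no dynamics at all, enters here — everything reduces to checking one inequality after an appropriate substitution.

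First I would unpack the two set memberships in terms of $V$. By the definition of the funnel $\Omega(V,t,\gamma)=\{e:V(t,e)\le\gamma\}$, the hypothesis $e^-_k\in\Omega(V,T_s,\gamma)$ is exactly the inequality $V(T_s,e^-_k)\le\gamma$, while the desired conclusion $e^+_k\in\Omega(V,0,\gamma)$ is exactly $V(0,e^+_k)\le\gamma$. So the task reduces to establishing $V(0,e^+_k)\le\gamma$ under the standing assumption $V(T_s,e^-_k)\le\gamma$.

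Next I would invoke the jump relation derived in \eqref{eq:jump}, namely $e^+_k=h(e^-_k,\hat{x}^-_k,\hat{u}^-_k,\Delta\hat{u}^+_k)$, so that $V(0,e^+_k)=V(0,h(e^-_k,\hat{x}^-_k,\hat{u}^-_k,\Delta\hat{u}^+_k))$. The goal is then to apply condition \eqref{eq:V_cond3} with the substitution $e=e^-_k$, $\hat{x}=\hat{x}^-_k$, $\hat{u}=\hat{u}^-_k$, and $\Delta\hat{u}=\Delta\hat{u}^+_k$. For this I must verify that each substituted argument lies in the set over which \eqref{eq:V_cond3} is quantified: the planner variables qualify because the planning layer enforces its own constraints, giving $\hat{x}^-_k\in\hat{\mathcal{X}}$, $\hat{u}^-_k\in\hat{\mathcal{U}}$, and, by the increment rule \eqref{eq:zero_order_uhat}, $\Delta\hat{u}^+_k\in\Delta\hat{\mathcal{U}}$; and the error argument $e=e^-_k$ satisfies the qualifier $V(T_s,e)\le\gamma$ precisely by the hypothesis of the proposition. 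Condition \eqref{eq:V_cond3} then delivers $V(0,e^+_k)\le\gamma$, i.e. $e^+_k\in\Omega(V,0,\gamma)$, which completes the argument.

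Since the proof is in essence a quantifier-matching exercise, the only point requiring genuine care — the nearest thing to an obstacle — is confirming that the planner's state, held input, and input increment at the sampling instant really do reside in $\hat{\mathcal{X}}$, $\hat{\mathcal{U}}$, and $\Delta\hat{\mathcal{U}}$, so that \eqref{eq:V_cond3} may be legitimately instantiated at those values. (I would also flag the apparent typo in the declared domain of $h$ in the statement: its first slot carries the error $e^-_k\in\R^{n_x}$, as in \eqref{eq:jump}, rather than an element of $\R^{\hat{n}_x}$.)
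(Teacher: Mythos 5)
Your proof is correct and follows essentially the same route as the paper's: unpack the sublevel-set memberships, substitute the jump relation \eqref{eq:jump}, and instantiate \eqref{eq:V_cond3} at the sampling-instant values. Your added care in checking that $\hat{x}^-_k$, $\hat{u}^-_k$, $\Delta\hat{u}^+_k$ satisfy the quantifier constraints, and your observation that the first slot of $h$ should be $\R^{n_x}$ rather than $\R^{\hat{n}_x}$, are both accurate.
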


\begin{proof}
Suppose $e^-_k \in \Omega(V,T_s,\gamma)$, i.e., $V(T_s, e^-_k)\leq \gamma$. By Eq.~\ref{eq:jump}, $e^+_k =  h(e^-_k,\hat{x}^-_k,\hat{u}^-_k,\Delta\hat{u}^+_k)$. Thus by Eq.~\ref{eq:V_cond3}, $V(0,e^+_k)\leq \gamma$, i.e., $e^+_k \in \Omega(V,0,\gamma)$.
\end{proof}

\begin{remark}
For the special case $\phi(x,\hat{x},\hat{u}) = 1$ and $\pi(\hat{x},\hat{u}) = \theta(\hat{x})+Q\hat{u}$ {for some $Q\in\mathbb{R}^{n_x\times\hat{n}_u}$}, the $\hat{x}$ and $\hat{u}$ terms cancel, and so (\ref{eq:V_cond3}) simplifies to
\begin{align}
    V(0,e-Q\Delta\hat{u}) \leq \gamma, \  \forall \Delta\hat{u}\in\Delta\hat{\mathcal{U}}, \ \forall e \text{ s.t. } V(T_s,e)\leq \gamma. \nonumber
\end{align}
\end{remark}
}

\subsection{Combining Within- and Across-Sample Analysis}
We next combine the conditions for within- and across-sample error boundedness
from Propositions~\ref{prop time varying} and~\ref{prop_jump_general}, respectively, to obtain the main result on the boundedness of the error at all time, formulated below and illustrated in Figure~\ref{fig: funnel}.

\begin{theorem}
\label{thm2}
If there exist $V$ and $\kappa$ satisfying \eqref{eq:V_cond0}--\eqref{eq:V_cond2}, and \eqref{eq:V_cond3}, define $\mathcal{O} \subset \R^{n_x}$ such that $$\cup_{t\in[0,T_s)}\Omega(V,t,\gamma) \subseteq \mathcal{O}.$$
Then for all $\hat{x}(t) \in \hat{\mathcal{X}}$, $\hat{u}(t) \in \hat{\mathcal{U}}$, $\Delta\hat{u}(t) \in \Delta \hat{\mathcal{U}}$, and $w(t) \in \mathcal{W}$, the error system (\ref{eq:error_dyn2}) under control law $u(t)=\kappa(\tilde{t},e(t),\hat{x}(t),\hat{u}(t)))$ with $\tilde t=(t\mod T_s)\in[0,T_s)$ satisfies:
$$e(0) \in \mathcal{E}_0~\Rightarrow ~e(t)\in \mathcal{O}, ~\forall t \ge 0,$$
that is to say, $\mathcal{O}$ is a TEB achieved by the tracking control law $\kappa$.
\end{theorem}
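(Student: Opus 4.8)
The plan is to prove the claim by induction on the sampling index $k$, alternately invoking Proposition~\ref{prop time varying} on each half-open sampling interval and Proposition~\ref{prop_jump_general} at each sampling instant, with a continuity argument bridging the two. I would write any time as $t = \tau_k + \tilde t$ with $k = \lfloor t/T_s\rfloor$ and $\tilde t = (t \bmod T_s) \in [0,T_s)$; the control law in the statement is exactly the shifted time-varying law of Remark~\ref{remark:shift_V}, so on each interval $[\tau_k,\tau_{k+1})$ the closed loop is governed by \eqref{eq:error_dyn3}. The inductive invariant I would carry is
\[
e(\tau_k^+) \in \Omega(V,0,\gamma),
\]
which holds at $k=0$ because $e(0)\in\mathcal{E}_0\subseteq\Omega(V,0,\gamma)$ by \eqref{eq:V_cond0}.

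For the inductive step, assume $e(\tau_k^+)\in\Omega(V,0,\gamma)$. Applying Proposition~\ref{prop time varying} (shifted to the $k$-th period via Remark~\ref{remark:shift_V}) yields $e(\tau_k+t)\in\Omega(V,t,\gamma)$ for all $t\in[0,T_s)$, and since $\Omega(V,t,\gamma)\subseteq\mathcal{O}$ by the definition of $\mathcal{O}$, the trajectory stays in $\mathcal{O}$ throughout the interior of the interval. To feed Proposition~\ref{prop_jump_general}, I would then establish $e(\tau_{k+1}^-)\in\Omega(V,T_s,\gamma)$; granting this, Proposition~\ref{prop_jump_general} applied with $\hat{x}^-_{k+1}\in\hat{\mathcal{X}}$, $\hat{u}^-_{k+1}\in\hat{\mathcal{U}}$, and $\Delta\hat{u}^+_{k+1}\in\Delta\hat{\mathcal{U}}$ maps the pre-jump state through the jump function $h$ of \eqref{eq:jump} to give $e(\tau_{k+1}^+)\in\Omega(V,0,\gamma)$, which restores the invariant and closes the induction.

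The one step that needs care --- and the main obstacle --- is closing the gap at the right endpoint of the interval: Proposition~\ref{prop time varying} only certifies containment on the \emph{half-open} interval $[0,T_s)$, whereas Proposition~\ref{prop_jump_general} requires the pre-jump value to satisfy $V(T_s,e(\tau_{k+1}^-))\le\gamma$. I would resolve this by continuity: along the solution of \eqref{eq:error_dyn3} the map $t\mapsto V(t,e(\tau_k+t))$ is continuous (as $V$ is $\mathcal{C}^1$ and $e(\cdot)$ is continuous) and is $\le\gamma$ for every $t\in[0,T_s)$, so taking the limit $t\to T_s^-$ and setting $e(\tau_{k+1}^-):=\lim_{t\to T_s^-}e(\tau_k+t)$ gives $V(T_s,e(\tau_{k+1}^-))\le\gamma$, i.e.\ $e(\tau_{k+1}^-)\in\Omega(V,T_s,\gamma)$, as required. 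Finally, to conclude for arbitrary $t\ge0$, I would combine the invariant with the within-sample bound: for $t=\tau_k+\tilde t$ the induction gives $e(\tau_k^+)\in\Omega(V,0,\gamma)$ and hence $e(t)\in\Omega(V,\tilde t,\gamma)\subseteq\mathcal{O}$, which is precisely the assertion $e(0)\in\mathcal{E}_0\Rightarrow e(t)\in\mathcal{O}$ for all $t\ge0$, so that $\mathcal{O}$ is a TEB achieved by $\kappa$.
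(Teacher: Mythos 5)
Your proposal is correct and follows essentially the same route as the paper's proof: an induction over sampling periods with invariant $e(\tau_k^+)\in\Omega(V,0,\gamma)$, using Proposition~\ref{prop time varying} together with Remark~\ref{remark:shift_V} within each period and Proposition~\ref{prop_jump_general} across the jumps. The only difference is that you make explicit the continuity argument needed to pass from the half-open interval $[0,T_s)$ to the pre-jump containment $e(\tau_{k+1}^-)\in\Omega(V,T_s,\gamma)$, a step the paper asserts without comment; this is a welcome clarification rather than a deviation.
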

\begin{proof}
From Remark~\ref{remark:shift_V} and for all $\tau_k = k\cdot T_s$, we have if $e(\tau_k) \in \Omega(V,0,\gamma)$, then $e(\tau_k + \tilde{t}) \in \Omega(V,\tilde{t},\gamma)$ and $e(\tau_{k+1}^-) \in \Omega(V,T_s,\gamma)$. Then it follows from Proposition \ref{prop_jump_general} that $e(\tau_{k+1}^+) \in \Omega(V,0,\gamma)$. As a result, for all $e(0) \in \mathcal{E}_0 \subseteq \Omega(V,0,\gamma)$, we have $e(k \cdot T_s +\tilde{t}) \in \Omega(V,\tilde{t},\gamma) \subseteq \mathcal{O}$, for all $k \ge 0$, and $\tilde{t} \in [0, T_s)$.
\end{proof}

\begin{figure}[tbh]
    \centering
    \includegraphics[width=0.45 \textwidth]{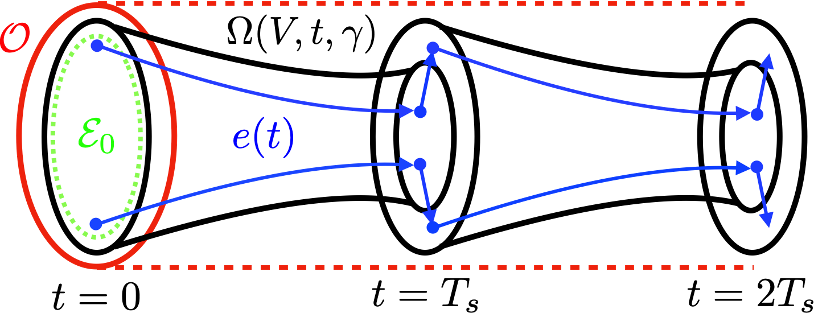}
    \caption{Illustration of Theorem~\ref{thm2}, with initial error set $\mathcal{E}_0$, funnels $\Omega_{t,\gamma}^V$ on each sampling period, bounded error jumps at sampling times, and TEB $\mathcal{O}$.}
    \label{fig: funnel}
\end{figure}

{
\begin{example} \label{ex:simpleStorageFn}
We now construct a storage function for the planner/tracker dynamics in Example~\ref{ex:integrators}.
Within sampling periods, $\dot{\hat{u}} = 0$, so the open loop error dynamics are 
\begin{align}
    \dot{e} =\begin{bmatrix} \dot{s}-\dot{\hat{s}}\\ \dot{v} - \dot{\hat{v}}\end{bmatrix} =  \begin{bmatrix} v-\hat{v}\\ u - 0\end{bmatrix} = \begin{bmatrix}e_2 \\ u \end{bmatrix}.
\end{align} 
Selecting a state-feedback controller $u(e) = -k_1 e_1 - k_2 e_2$, the closed loop error dynamics are
\begin{align}
    \dot{e} = \begin{bmatrix} 0 & 1\\ -k_1 & -k_2\end{bmatrix} e =: Ae. 
\end{align}
Because this is a LTI system, constructing a storage function is straightforward. If there exists $P = P^\top > 0$ such that {$PA + A^\top P < -\alpha P$} for some $\alpha>0$, then it is simple to show that $V(t,e) = \exp(\alpha t) \cdot e^\top P e$ satisfies 
$\dot{V}(t,e) < 0$ for all $e\neq 0$. 
Hence, for any $\gamma > 0$, $\cup_{t\in[0,T_s)}\Omega(V,t,\gamma)$ forms a valid TEB by Theorem~\ref{thm2}.
We can examine the form of the level sets to see how they shrink with time
\begin{align}
    \Omega(V,t,\gamma) 
    &= \left\{e\in\mathbb{R}^{n_x} : e^\top P e \leq \frac{\gamma}{\exp{\alpha t}}\right\}
\end{align}
As $t$ increases, $e$ is forced to lie in smaller and smaller ellipsoids. Then the jump condition is 
\begin{align}
    &\left(e-\begin{bmatrix}0\\ \Delta\hat{u}\end{bmatrix}\right)^\top P\left(e-\begin{bmatrix}0\\ \Delta\hat{u}\end{bmatrix}\right) \leq \gamma \\
    \text{ for all } & \Delta\hat{u} \in\Delta\hat{\mathcal{U}} \text{ and } e \text{ s.t. } e^\top P e \leq \frac{\gamma}{\exp(\alpha T_s)}, \nonumber
\end{align}
meaning that if the error lies in the smallest ellipsoid at the end of the sampling period, then for all values of $\Delta\hat{u}$ the perturbed error will lie in the largest ellipsoid at the start of the next sampling period, as illustrated in Figure~\ref{fig: funnel}. 
\end{example}
}

\subsection{SOS Optimization}
Again, to use SOS optimization to search for $V$ and $\kappa$, we restrict them to polynomials: $V \in \R[(t,e)]$, and $\kappa \in \R[(t,e,\hat{x},\hat{u})]$. 
{We further assume that $f_e$ (\ref{eq:fe}), $g_e$ (\ref{eq:ge}), and the jump function $h$ (\ref{eq:jump}) are polynomials.}
In addition to Assumption~\ref{ass:poly}, we assume $\Delta\hat{\mathcal{U}} = \{\Delta \hat{u} \in \R^{\hat{n}_{u}}: p_\Delta(\Delta \hat{u}) \leq 0\}$, where $p_\Delta \in \R[\Delta \hat{u}]$. By choosing the integral of the volume of $\Omega(V,t,\gamma)$ over the time interval $[0, T_s]$ as the cost function, and applying the generalized S-procedure to \eqref{eq:V_cond0}--\eqref{eq:V_cond2}, and \eqref{eq:V_cond3}, we obtain the following optimization problem:
\begingroup
\allowdisplaybreaks
\begin{subequations} \label{eq:sosopt2}
\begin{align}
\displaystyle \min_{V, \kappa, s,l} ~&
        \int_0^{T_s} \text{volume}(\Omega(V,t,\gamma))dt \vspace{2mm} \nonumber \\ 
        \ \  \mathrm{s.t.} ~
    &s_{1\rightarrow4} \in \Sigma[(t,e,\hat{x},\hat{u},w)], s_{5\rightarrow6} \in \Sigma[(e,\Delta\hat{u})],\nonumber \\
    & l \in \R[(t,e,\hat{x},\hat{u},w)], s_0 \in \Sigma[e], \nonumber \\
    & s_{7\rightarrow14,i} \in \Sigma[(t,e,\hat{x},\hat{u})], i \in \{1,...,n_u\}, \label{eq:opt2_cond0}\\
    &\gamma-V(0,e) + s_0\cdot p_0 \in \Sigma[e], \label{eq:opt2_cond1}\\
    &-\left(\frac{\partial V}{\partial t}+\frac{\partial V}{\partial e}\cdot(f_e+g_e \kappa)\right) - \epsilon e^\top e +l \cdot (V - \gamma)  \nonumber \\
    &~~~ + s_1 \cdot p_{\hat x}+s_2 \cdot p_{\hat u} +s_3 \cdot p_w  - s_4 \cdot t (T_s - t) \nonumber \\
    &~~~ \in \Sigma[(t,e,\hat{x},\hat{u},w)], \label{eq:opt2_cond2} \\
    &-(V(0,e - P\cdot [0;\Delta \hat{u}])-\gamma) + s_5\cdot(V(T_s,e)-\gamma) \nonumber \\
    &~~~ +s_6 \cdot p_{\Delta} \in \Sigma[(e,\Delta\hat{u})], \label{eq:opt2_cond3}\\
    &\overline{u}_i - \kappa_i + s_{7,i}\cdot (V - \gamma)- s_{8,i} \cdot t (T_s - t) + s_{9,i} \cdot p_{\hat{x}} \nonumber \\
    & ~~~  + s_{10,i} \cdot p_{\hat{u}} \in \Sigma[(t, e,\hat{x},\hat{u})], i \in \{1, ..., n_u\}, \label{eq:opt2_cond4} \\
    &\kappa_i - \underline{u}_i + s_{11,i}\cdot (V - \gamma) - s_{12,i} \cdot t (T_s - t) + s_{13,i} \cdot p_{\hat{x}} \nonumber \\
    & ~~~  + s_{14,i} \cdot p_{\hat{u}} \in \Sigma[(t,e,\hat{x},\hat{u})], i \in \{1, ..., n_u\}. \label{eq:opt2_cond5}
\end{align}
\end{subequations}
\endgroup
The optimization is bilinear in two groups of decision variables $V$ and $(\kappa, l, s_5, s_{7,i}, s_{11,i})$, and can also be solved using alternating direction method similar to Algorithm~\ref{alg:alg1} in the Appendix.

After the funnel $\Omega(V,t,\gamma)$ is found, the next step is to  compute a TEB $\mathcal{O}$ by solving a convex optimization:
\begin{equation}\label{eq:TEB_compute}
\begin{aligned}
    \min \ \ &\text{volume}(\mathcal{O}) \\
    \text{s.t.} \ \ & \Omega(V,t,\gamma) \subseteq \mathcal{O}, \ \forall t \in [0, T_s].  
\end{aligned}
\end{equation}
The set $\mathcal{O}$ is restricted  to a semi-algebraic set in order to convert the set containment constraint into an SOS constraint. Depending on the parameterization of $\mathcal{O}$, different cost functions can be chosen. For example, if $\mathcal{O}$ is an ellipsoid, $\mathcal{O} = \{e \in \R^{n_x}: e^\top P_{\mathcal{O}} e \leq 1\}$, where $P_{\mathcal{O}} \in \mathbb{S}^{n_x}_{++}$ is a decision variable, then $-\log\det(P_{\mathcal{O}})$ can be used as a cost function. If $\mathcal{O}$ is a polytope, $\mathcal{O} = \{e \in \R^{n_x}: A_{\mathcal{O}} e  \leq b_{\mathcal{O}}\}$, where $A_{\mathcal{O}} \in \R^{n_{\mathcal{O}}\times n_x}$ is fixed, and $b_{\mathcal{O}} \in \R^{n_{\mathcal{O}}}$ is a decision variable, then $\sum_{i=1}^{n_{\mathcal{O}}} b_{\mathcal{O}, i}$ can be used as a cost function, where $b_{\mathcal{O}, i}$ is the $i$-th element of $b_{\mathcal{O}}$.

{
Once a TEB $\mathcal{O}$ is computed from the SOS optimization \eqref{eq:sosopt2}-\eqref{eq:TEB_compute},
we can check the following safety condition, which is a generalized version of (\ref{eq:safetyCond1}):
\begin{align}\label{eq:safetyCond2}
    \nu(\mathcal{O}, \hat{\mathcal{X}}, \hat{\mathcal{U}}) \subseteq \mathcal{X}.
\end{align}
If (\ref{eq:safetyCond2}) is satisfied, then the tracker state $x$ is guaranteed to satisfy state constraints $\mathcal{X}$ and the design is considered successful. If (\ref{eq:safetyCond1}) is \textit{not} satisfied, we shrink the planner sets $\hat{\mathcal{X}}$ and $\mathcal{U}$ and repeat the process.
}

\section{Vehicle Obstacle Avoidance Example} \label{examples}


{
We now apply the planner-tracker control scheme to a vehicle obstacle avoidance example.
For the high-fidelity tracking model, we use the dynamic bicycle model from \cite{Jkong2015}:}
\begin{equation} 
\begin{aligned}
    \dot{x}_1(t) &= x_5(t) \cos(x_3(t)) - x_6(t) \sin(x_3(t)), \\
    \dot{x}_2(t) &= x_5(t) \sin(x_3(t)) + x_6(t) \cos(x_3(t)), \\
    \dot{x}_3(t) &= x_4(t), \\
    \dot{x}_4(t) &= \frac{2}{I_z}(l_f F_{c,f}(t) - l_r F_{c,r}(t)), \\
    \dot{x}_5(t) &= x_4(t)x_6(t) + u_2(t), \\
    \dot{x}_6(t) &= -x_4(t)x_5(t) + \frac{2}{m}(F_{c,f}(t) + F_{c,r}(t))
\end{aligned}
\end{equation}
with
\begin{align}
    F_{c,f} &= C_{\alpha, f} \alpha_{f}, F_{c,r} = C_{\alpha, r}  \alpha_r \\
    \alpha_f &= \frac{x_6 + l_f x_4}{x_5} - u_1, \ \alpha_r =  \frac{x_6 - l_r x_4}{x_5}
\end{align}
where $x_1$ to $x_6$ represent $x$, $y$ positions in an inertial frame, inertial heading, yaw rate, and longitudinal and lateral speeds in the body frame. Variables $u_1$, $u_2$ represent front wheel steering angle and longitudinal acceleration, $m$ and $I_z$ denote the vehicle's mass and yaw inertia, and $l_f$ and $l_r$ represent the distance
from the center of mass of the vehicle to the front and
rear axles. $C_{\alpha, i}$ is the tire cornering stiffness, where $i \in \{f, r\}$. We use the parameter values $m = 1.67 \times 10^3$~kg, $I_z = 2.1 \times 10^3~ \text{kg}/\text{m}^2$, $l_f = 0.99$~m, $l_r = 1.7$~m, $C_{\alpha,f} = -6.1595 \times 10^4~\text{N}/\text{rad}$, and $C_{\alpha,r} = -5.2095 \times 10^4~\text{N}/\text{rad}$.

The planning model is a Dubin's vehicle model:
\begingroup
\allowdisplaybreaks
\begin{align*}
    \dot{\hat{x}}_1(t) &= \hat{u}_2(t) \cos(\hat{x}_3(t)), \\
    \dot{\hat{x}}_2(t) &= \hat{u}_2(t) \sin(\hat{x}_3(t)), \\
    \dot{\hat{x}}_3(t) &= \hat{u}_1(t),
\end{align*}
\endgroup
where $\hat{x}_1$ to $\hat{x}_3$ represent $x$, $y$ positions and heading angle, and $\hat{u}_1$ and $\hat{u}_2$ represent angular velocity and velocity. If we use the map $\pi(\hat{x}) = [\hat{x}; 0_{3\times 1}]$, where $\hat{x} = [\hat{x}_1; \hat{x}_2; \hat{x}_3]$, then $x_4$ and $x_5$ will become part of the resulting error state. As a result, the magnitude of the absolute state $x_4$ and $x_5$ will be minimized in optimization~\eqref{eq:sosopt1}, which 
is practically undesirable.
To eliminate this issue, we use a map $\pi(\hat{x},\hat{u}) = [\hat{x};\hat{u};0]$, where $\hat{u} = [\hat{u}_1; \hat{u}_2]$, which also provides reference signals for $x_4$ and $x_5$. 

The error is defined as in (\ref{eq:error_xhat_uhat}), with $\pi(\hat{x},\hat{u}) = [\hat{x}; \hat{u}; 0]$ and $\phi(\hat{x}) = \text{diag}(R^{-1}(\hat{x}_3), I_{4})$, where $R(\psi) = \left[\begin{smallmatrix} \cos(\psi) & -\sin(\psi) \\ \sin(\psi) & \cos(\psi) \end{smallmatrix} \right]$. In this example, $\phi$ allows us to replace the trigonometric functions in $\hat{x}_3$ in the error dynamics by trigonometric functions in $e_3 = (x_3 - \hat{x}_3)$, which can easily be approximated by polynomials in a certain range of $e_3$. The sampling time used in this example is $T_s = 0.1$ s. The input and input jump spaces for the planning model are $\hat{\mathcal{U}} = [-\pi/8, \pi/8] \times [2, 4]$, and $\Delta \hat{\mathcal{U}} = [-\pi/50, \pi/50] \times [-0.075, 0.075]$.

\begin{figure*}
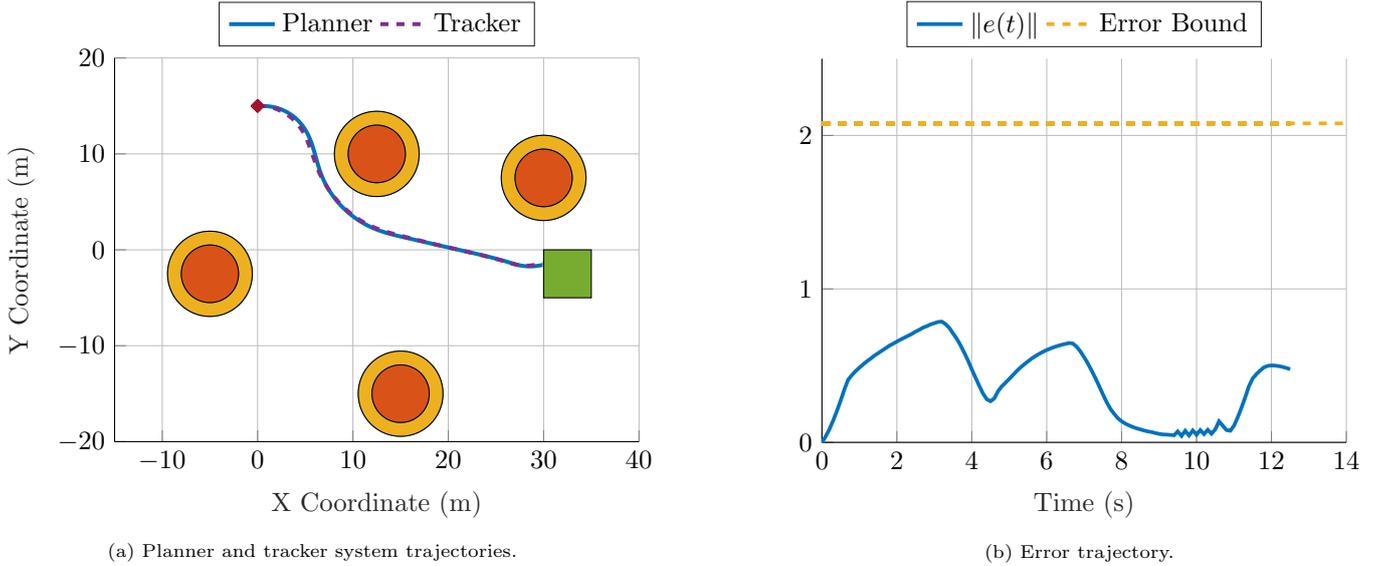

    \centering
    \begin{subfigure}{0.45\textwidth}
        \centering
        \input{plots/veh_obs_avoid_traj}
        \caption{Planner and tracker system trajectories.}
        \label{plannerTracker}
    \end{subfigure}
    \hfill
    \begin{subfigure}{0.45\textwidth}
        \centering
        \input{plots/veh_obs_avoid_err}
        \caption{Error trajectory.}
        \label{errorBound}
    \end{subfigure}
    \caption{Simulation results for the vehicle obstacle avoidance example. In Figure \ref{plannerTracker} we plot the trajectories of the planner and tracker systems through the environment, and in Figure \ref{errorBound} we plot $\|e(t)\|$ and its guaranteed upper bound. In Figure \ref{plannerTracker}, the initial position of the vehicle is marked with a red diamond and the goal set is represented with a green box. The four orange circles are the obstacles the vehicle must avoid. For each obstacle, the expanded unsafe region is shown in yellow.}
    \label{vehObsAvoidSim}
\end{figure*}

In this example, the SOS optimizations are formulated using SOSOPT~\cite{Pete:13}, and solved by MOSEK. To compute the 
tracking controller, we parameterize the storage function $V$, control law $\kappa$, and multipliers $s, l$ as degree-2 polynomials. We solve optimization~\eqref{eq:sosopt2} with these decision variables, and the computation terminates in $35.6$ seconds on a laptop with an Intel core i5 processor. Then we solve optimization~\eqref{eq:TEB_compute} with the error bound $\mathcal{O}$ parameterized as a hypercube. The resulting error bound on $(e_1, e_2, e_3)$ is $[-1.07, 1.07] \times [-1.44, 1.44] \times [-1.05, 1.05]$.

The resulting tracking controller is then tested in simulation with a corresponding planner, see Figure \ref{vehObsAvoidSim}. The objective for the planner system is to generate a pathway through the environment that avoids all obstacles and eventually reaches a goal set, which is accomplished using a standard model-predictive controller:

\begin{subequations} \label{vehObsAvoidMPC}
\begin{align}
& \underset{\hat{u}(\cdot)}{\text{min}}
& J & = \ell_f(\hat{x}(t + N_p + 1)) + \sum_{k=t}^{t+N_p} \ell(\hat{x}(k), \hat{u}(k)) \label{MPC2Obj} \\
& \text{s.t.} & & \hat{x}(k + 1) = \hat{x}(k) + T_s \cdot \hat{f}_{\text{approx}}(\hat{x}(k), \hat{u}(k)), \label{MPC2Constr1} \\
& & & \hat{x}(k) \in \hat{\mathcal{X}}, \label{MPC2Constr2} \\
& & & \hat{u}(k) \in \hat{\mathcal{U}}, \label{MPC2Constr3} \\
& & & \hat{x}(t) =\hat{x}_0 \label{MPC2Constr4}, \\
& & & \forall k = t, \dots, t+N_p, \nonumber \\
& & & \hat{u}(k) - \hat{u}(k - 1) \in \Delta \hat{\mathcal{U}}, \label{MPC2Constr5} \\
& & & \hat{u}(t) - \hat{u}_0 \in \Delta \hat{\mathcal{U}}, \label{MPC2Constr6} \\
& & & \forall k = t + 1, \dots, t+N_p,
\end{align}
\end{subequations}
{where $\ell(\cdot, \cdot)$ in (\ref{MPC2Obj}) is the state / input cost at each time step, $\ell_f(\cdot)$ is the final state cost, (\ref{MPC2Constr1}) is the polynomial approximation of the discretized Dubin's vehicle dynamics, (\ref{MPC2Constr2}) and (\ref{MPC2Constr2}) ensure state and input constraints are obeyed, and (\ref{MPC2Constr4}) is the initial state constraint.}
Furthermore, $\hat{u}_0$ is the input that was applied at the previous time step, and therefore \eqref{MPC2Constr5} and \eqref{MPC2Constr6} ensure the input jump constraints are respected. 
The objective to reach the goal set is encoded using the functions $\ell$ and $\ell_f$.
The initial state of the vehicle is $\hat{x}_0 = [0; \ 15; \ 0]$ and the goal set is a square region centered at $(32.5, -2.5)$ with a height and width of 5m. There are four circular obstacles centered at $(-5, -2.5)$, $(12.5, 10)$, $(30, 7.5)$, and $(15, -15)$, each with a radius of 3m. Since the maximum position tracking error the vehicle will experience is 1.44m, for each obstacle, we constrain the vehicle to avoid a circular region centered at the obstacle coordinates with an expanded radius of 4.44m. This ensures the vehicle will not collide with any of the obstacles. Indeed, in simulation the vehicle successfully navigates past each obstacle and eventually reaches the goal set, as shown in Figure \ref{vehObsAvoidSim}.

\section{Conclusion}\label{sec:conclusion}
In this tutorial, we address robust trajectory planning and control design for nonlinear systems. A hierarchical trajectory planning and control framework is proposed, where a
low-fidelity model is used to plan trajectories satisfying planning constraints, and a high-fidelity
model is used for synthesizing tracking controllers guaranteeing the boundedness of the error state between the low- and high-fidelity models. We consider error states that are functions of both planner states and inputs, which offers more freedom in the choice of the low-fidelity model. SOS optimizations are formulated for computing the tracking controllers and their associated tracking error bound simultaneously.
Finally, we demonstrate the planner-tracker control scheme on a vehicle obstacle avoidance example.

When implementing the planner-tracker framework in real-time, there are still challenges for providing a full guarantee of safety.
First, two sources of error in the planner dynamics are present in the example above: (1) the discretization error from the forward Euler discretization, and (2) the polynomial approximation error from the trigonometric terms. If a bound on these errors were known, it would be possible to incorporate them into the design process, ensuring instead that the planner constraints, when augmented with the discretization error, polynomial approximation error, \textit{and} the tracking error still satisfy the tracker constraints. We do not perform such an analysis in this tutorial. If the MPC problem were solved using an optimization solver that required only function evaluations of the dynamics (rather than the derivative), it would also be possible to avoid any discretization error by using a function evaluation oracle that could calculate the exact discretization via numerical integration.

In this tutorial, we also do not address the question of MPC terminal sets and costs for stability and persistent feasibility guarantees for the MPC problem. These sets/costs can be computed in simple cases but may increase the computational burden, both offline and online. Real-time reliability of solvers for MPC, especially for nonlinear models, should also be considered in practical applications.
Finally, defining the error variable can require clever selection of the function $\phi$ to make terms in the dynamics cancel, which isn't always intuitive. The shortcomings mentioned above also provide directions for further research.




\section*{Acknowledgments}
This work was supported in part by ONR Grant N00014-18-1-2209, NSF Grant ECCS-1906164, and AFOSR Grant FA9550-21-1-0288. The authors would like to thank 
Monimoy Bujarbaruah for pointing to Tube MPC references.

\bibliographystyle{abbrv}   
\bibliography{main.bib} 
\renewcommand{\baselinestretch}{1}
\section*{Appendix}
The algorithm 
to solve the bilinear optimization~(\ref{eq:sosopt1})
is summarized below, the $(\kappa,\gamma)$-step of which treats $\gamma$ as a decision variable. By minimizing $\gamma$, the volume of $\Omega(V^{j-1}, \gamma)$ can be shrunk. In the $V$-step, \eqref{eq:levelset_grow} enforces $\Omega(V^j, \gamma^j) \subseteq \Omega(V^{j-1}, \gamma^j)$.

\begin{algorithm} [h]
	\caption{Alternating direction method}
	\label{alg:alg1}
	\begin{algorithmic}[1]
		\Require{function $V^0$ such that constraints~\eqref{eq:sosopt1} are feasible by proper choice of $s, l, \kappa, \gamma$.}
		\Ensure{$\kappa, \gamma, V$.}
		\For{$j = 1:N_{\text{iter}}$}
		\State $\boldsymbol{(\kappa,\gamma)}$\textbf{-step}: decision variables $(s, l, \kappa,\gamma)$.
			
			Minimize $\gamma$ subject to \eqref{eq:sosopt1}  using $V = V^{j-1}$. 
			
			This yields ($l^j, s_{4,i}^j, s_{7,i}^j, \kappa^j$) and the cost $\gamma^j$.
		\State $\boldsymbol{V}\textbf{-step}$: decision variables $(s_{1\rightarrow3}, s_{5\rightarrow6,i},$
		
	    $s_{8\rightarrow9,i}, V)$; Maximize the feasibility subject to
	    
	    \eqref{eq:sosopt1} as well as $s_{10} - \epsilon \in \Sigma[e]$, and
			\begin{align}
			\ \ \ - s_{10} \cdot (V^{j-1} - \gamma^j) +  (V - \gamma^j) \in \Sigma[e], \ \label{eq:levelset_grow}
			\end{align}

			 using ($\gamma = \gamma^j, s_{4,i} = s_{4,i}^j, s_{7,i} = s_{7,i}^j, \kappa=\kappa^j$, 
			 
			 $l = l^j$). This yields $V^j$.
		\EndFor
	\end{algorithmic}
\end{algorithm}

The input to Algorithm~\ref{alg:alg1} is a feasible initial guess $V^0$. One candidate might be a quadratic Lyapunov function $\bar{V}$ obtained by solving Lyapunov equations using the linearized error dynamics with LQR controllers. However, $\bar{V}$ might be too coarse to be feasible for the constraints~\eqref{eq:sosopt1}. Here, we introduced a slack variable $\lambda > 0$ to the constraint~\eqref{eq:sos_cond2} to relax the constraint, and quantify how far $\bar{V}$ is away from a feasible candidate:
\begin{align}
    &-\frac{\partial V}{\partial e}\cdot(f_e+g_e \cdot \kappa) + \lambda - \epsilon e^\top e +l \cdot (V - \gamma) + s_1 \cdot p_{\hat x} \nonumber \\
    & ~~~~~~ +s_2 \cdot p_{\hat u} + s_3 \cdot p_w \in \Sigma[(e,\hat{x},\hat{u},w)]. \label{eq:relaxed_cond}
\end{align}
By iteratively search over two bilinear groups of decision variables, we minimize $\lambda$ until $\lambda \leq 0$. Based on this idea, an algorithm to compute $V^0$ from $\bar{V}$ is proposed as Algorithm~\ref{alg:alg_init}.
\begin{algorithm} [h]
	\caption{Computation of $V^0$}
	\label{alg:alg_init}
	\begin{algorithmic}[1]
		\Require{function $\bar{V}$, and $\bar{\gamma} > 0$.}
		\Ensure{$V^0$.}
		\State $V^{\text{pre}} \gets \bar{V}$
		\While{$\lambda > 0$}
		\State $\boldsymbol{\kappa}$\textbf{-step}: decision variables $(s, l, \kappa)$.
			
			Minimize $\lambda$ subject to (\ref{eq:sos_cond0}--\ref{eq:sos_cond1}, \ref{eq:relaxed_cond}, \ref{eq:sos_cond3}--\ref{eq:sos_cond4}),  
			
			using $V = V^{\text{pre}}$, $\gamma = \bar{\gamma}$. 
			
			$(l^{\text{pre}}, s_{4,i}^{\text{pre}}, s_{7,i}^{\text{pre}}, \kappa^{\text{pre}}) \gets (l, s_{4,i}, s_{7,i}, \kappa)$
		\State $\boldsymbol{V}\textbf{-step}$: decision variables $(s_{1\rightarrow3}, s_{5\rightarrow6,i},$
		
	    $s_{8\rightarrow9,i}, V)$; Minimize $\lambda$ subject to (\ref{eq:sos_cond0}--\ref{eq:sos_cond1}, \ref{eq:relaxed_cond},
	    
	    \ref{eq:sos_cond3}--\ref{eq:sos_cond4}) using ($\gamma = \bar{\gamma}, s_{4,i} = s_{4,i}^{\text{pre}}, s_{7,i} = s_{7,i}^{\text{pre}}, \kappa=$
	    
	    $\kappa^{\text{pre}}$, $l = l^{\text{pre}}$). 
	    
	    $V^{\text{pre}} \gets V$
		\EndWhile
		\State $V^0 \gets V^{\text{pre}}$
	\end{algorithmic}
\end{algorithm}


\end{document}